\documentclass[12pt]{article}
\usepackage{euscript,amsmath,amssymb,latexsym,amsthm}

\textwidth 165mm \textheight 225mm \topmargin -10mm \oddsidemargin
-5mm \evensidemargin -5mm

\binoppenalty=10000 \relpenalty=10000 \sloppy

\DeclareMathOperator{\St}{St}

\theoremstyle{remark}
\newtheorem*{remark}{Remark}

\theoremstyle{plain}
\newtheorem{theorem}{Theorem}
\newtheorem{proposition}{Proposition}

\begin{document}
\title{
\vspace{1cm} {\bf Derivation of the particle dynamics from kinetic equations}
}
\author{A.\,S.~Trushechkin\bigskip
 \\
{\it  Steklov Mathematical Institute of the Russian Academy of Sciences}
\\ {\it Gubkina St. 8, 119991 Moscow, Russia}\medskip\\
{\it  National Research Nuclear University ``MEPhI''}
\\ {\it Kashirskoe Highway 31, 115409 Moscow, Russia}\bigskip
\\ e-mail:\:\texttt{trushechkin@mi.ras.ru}}

\date {}
\maketitle

\begin{abstract}
We consider the microscopic solutions of the Boltzmann--Enskog equation discovered by Bogolyubov. The fact that the time-irreversible kinetic equation has time-reversible microscopic solutions is rather surprising. We analyze this paradox and show that the reversibility or irreversibility property of the Boltzmann--Enskog equation depends on the considered class of solutions. If the considered solutions have the form of sums of delta-functions, then the equation is reversible. If the considered solutions belong to the class of continuously differentiable functions, then the equation is irreversible. Also, we construct the so called approximate microscopic solutions. These solutions are continuously differentiable and they are reversible on bounded time intervals.

This analysis suggests a way to reconcile the time-irreversible kinetic equations with the time-reversible particle dynamics. Usually one tries to derive the kinetic equations from the particle dynamics. On the contrary, we postulate the Boltzmann--Enskog equation or another kinetic equation and treat their microscopic solutions as the  particle dynamics. So, instead of the derivation of the kinetic equations from the microdynamics we suggest a kind of derivation of the microdynamics from the kinetic equations.
\end{abstract}

\section{Introduction}

In the known works of Bogolyubov \cite{Bogol75, BogBog}, the Boltzmann--Enskog kinetic equation is shown to have microscopic solutions. This result is surprising for two reasons. The first reason is that the Boltzmann--Enskog equation is regarded as an approximation to the particle dynamics in the case of a dilute gas. This result shows that the Boltzmann--Enskog equation has exact microscopic solutions as well. Also the Vlasov kinetic equation is known to have microscopic solutions \cite{Vlasov}.

But, in contrast to the Vlasov equation, the Boltzmann--Enskog equation is time-irreversible and describes the entropy production. However, the microscopic dynamics is time-reversible. So, this result tells that an irreversible equation has reversible solutions. The aim of this report is to analyze this paradox.

We show that the reversibility or irreversibility property of the Boltzmann--Enskog equation depends on the considered class of solutions. If the considered solutions have the form of sums of delta-functions, then the equation is reversible. If the considered solutions belong to the class of continuous functions, then the equation is irreversible. 

Also, we construct the so called approximate microscopic solutions of the Boltzmann--Enskog equation. These solutions are continuous and they are reversible on bounded time intervals. The approximate microscopic solutions turn to the (exact) microscopic solutions in the limit.

This result can be compared with the Kozlov's diffusion theorems for a collisionless continuous medium in a rectangular box \cite{Kozlov}. The dynamics is described by the Liouville equation. If we consider a solution in the form of a delta-function, then the solution is periodic or almost periodic (this case is reduced to the single-particle dynamics). But if we consider an integrable solution, the spatial density weakly converges to the uniform distribution, as time indefinitely increases. So, the asymptotic dynamical properties of solutions depend on the considered class of functions.

But anyway, the Liouville equation is time-reversible. Here we show that in the case of the Boltzmann--Enskog equation, not only the asymptotic properties of solutions, but also the reversibility or irreversibility property of them depend on the considered class of functions.

Further, this analysis suggests a way to reconcile irreversible kinetic equations with the reversible microscopic dynamics. Kinetic equations describe the gas as a whole, while equations of microdynamics describe the same gas as a collection of particles (atoms or molecules). If we believe that the properties of the gas as a whole can be reduced to the properties of its particles (the position of reductionism), then we shall postulate the equations of microdynamics and try to derive the corresponding kinetic equations from them.

An elegant derivation of the Boltzmann and Boltzmann--Enskog equations from the Liouville equation was proposed by Bogolyubov \cite{Bogol}. He uses the BBGKY (Bogolyubov--Born--Green--Kirkwood--Yvon) hierarchy of equations, the thermodynamic limit and some additional assumptions (which do not follow from the Liouville equation or its initial data). This derivation leads to the divergences in high order corrections to the Boltzmann and Boltzmann--Enskog equations \cite{Bogol77}.

Another derivation of the Boltzmann equation was proposed by Lanford \cite{Lanford} (see also \cite{Lebowitz,Spohn}). He also uses the BBGKY hierarchy of equations, the Boltzmann--Grad limit, all assumptions are included in the initial conditions for the Liouville equation. But the derivation can be applied only to small times. On the contrary, the Boltzmann equation is interesting from the viewpoint of the large time asymptotics (the relaxation to the Maxwell distribution) \cite{Villani}.

So, we have not a satisfactory derivation of the Boltzmann and Boltzmann--Enskog equations from the equations of microdynamics.

In this report we propose an alternative way to reconcile kinetic equations with equations of microdynamics. We postulate not the equations of microdynamics, but the kinetic equations and treat their microscopic solutions as the particle dynamics. The establishment of the existence of microscopic solutions can be regarded as a kind of derivation of the particle dynamics from a kinetic equation. So, instead of the derivation of kinetic equations from the microdynamics we suggest a derivation of the microdynamics from kinetic equations.

In this interpretation the ``fundamental'' laws of motion are not the laws of motion for individual particles (reductionism), but the laws of motion for the single-particle density function, which describes the system as a whole (holism). Laws of motion for the individual particles are approximations.

The Boltzmann--Enskog equation describes the dynamics of a hard sphere gas. In the final section of the paper we consider other kinetic equations. Namely, we consider the Boltzmann--Enskog equation with an additional ``Vlasov integral'' (we call it the Boltzmann--Enskog--Vlasov equation) and the Boltzmann equation. The Boltzmann--Enskog--Vlasov equation describes the dynamics of hard spheres that additionally interact with some long-ranged interaction potential. This equation also has microscopic solutions and, hence, one can speak about a derivation of the Newton equation (or Hamilton equations) of microdynamics from it. 

The Boltzmann equation has not microscopic solutions, but it can be regarded as an approximation of the Boltzmann--Enskog (or a Boltzmann--Enskog-like) equation in the case of a large number of particles and a negligible interaction radius of them (the Boltzmann--Grad limit). In this sense one can speak about a reconciliation of the microdynamics with the Boltzmann equation as well.

Recently Igor Volovich proposed functional mechanics \cite{VolFuncMech,VolRand} (see also \cite{Mikh,PiskVol,Pisk,TrVolFuncRat}) for solving the irreversibility problem. The functional mechanics states that the fundamental notion of mechanics is not a material point, but an (integrable) probability density function. Hence, the fundamental equation even for a single particle is not the Newton equation (or Hamilton equations), but the Liouville equation (or even the Fokker--Planck--Kolmogorov equation). A solution of the Newton equation can be regarded as a solution of the Liouville equation in the form of a delta-function (another interpretation: the Hamilton equations are the equations of characteristics for the Liouville equation). But delta-functions are not allowed to be considered. A particle with definite position and momentum is an approximation, which is suitable if the integrable density function has a specific (delta-like) form.

As we have mentioned above, the spatial density for an integrable solution of the Liouville equation in a box weekly converges to the uniform distribution. So, if we postulate not the Newton, but the Liouville equation for a single particle, we obtain a kind of relaxation to the equilibrium state for it. Thus, the paradox between the recurrence of the microdynamics and the convergence to the equilibrium state of the macrodynamics is eliminated: both dynamics exhibit the convergence to the equilibrium.

We use the same idea: we notice that the solution of the Newton equation for the system of hard spheres can be regarded as a special type solution (in the form of the sum of delta-functions) of the Boltzmann--Enskog equation. Then we postulate the Boltzmann--Enskog equation. The existence of microscopic solutions allows us to treat the particle dynamics as an approximation, which is suitable if the density function has a specific (delta-like) form.

Also Igor Volovich obtained a kinetic equation for a finite number of particles \cite{VolBogol}. In fact, the Boltzmann--Enskog equation is also a kinetic equation for a finite number of particles.

I am very happy to dedicate this report to the 65th anniversary of Professor Igor Volovich, and I would like to wish him good health and further scientific successes.

\section{Boltzmann--Enskog equation and its microscopic solutions}

Let us consider a gas of hard spheres with diameter $a>0$ in some 3-dimensional domain. In order to avoid difficulties with taking a boundary of the domain into account, let us assume that the domain is the 3-dimensional torus $\mathbb T^3$. The dynamics in a rectangular box is known to be reduced to the dynamics on a torus. The Boltzmann--Enskog kinetic equation for this gas takes the form

\begin{equation}\label{EqBE}
\frac{\partial f(r_1,v_1,t)}{\partial t}=-v_1\frac{\partial f(r_1,v_1,t)}{\partial r_1}+\St f,
\end{equation}
where
\begin{equation}\label{EqStBE}
\St f=na^2\int_{(v_{21},\sigma)\geq0}
(v_{21},\sigma)[f(r_1,v'_1,t)f(r_1+a\sigma,v'_2,t)-f(r_1,v_1,t)f(r_1-a\sigma,v_1,t)]
d\sigma dv_2.
\end{equation}
Here $f(r_1,v_1,t)$ is a density of particles near the point $r_1\in\mathbb T^3$ that have velocities near $v_1\in\mathbb R^3$ at the moment of time $t$. Let $f$ be normalized on the volume $V$ of the torus $\mathbb T^3$: 

$$\int_{\mathbb T^3\times\mathbb R^3}f(r_1,v_1,t)dr_1dv_1=V.$$

The expression $\St f_1$ is called the collision integral. $n>0$ is a constant (the mean concentration of particles), $v_1$ and $v_2$ are the velocities of the colliding particles just before the collision, $v'_1$ and $v'_2$ are the velocities of the particles just after the collision:

\begin{equation}\label{EqHardColl}
\begin{aligned}
v'_1&=v_1+\sigma(v_{21},\sigma),\\
v'_2&=v_2-\sigma(v_{21},\sigma).
\end{aligned}
\end{equation}
Here $v_{21}=v_2-v_1$, the vector $\sigma$ is a unit vector from the center of the second particle to the center of the first particle (so, $\sigma$ belongs to the 2-dimensional unit sphere: $\sigma\in S^2$), $(\cdot,\cdot)$ is a scalar product.

The Boltzmann--Enskog equation describes the irreversible dynamics of a hard sphere gas and the entropy production, a rigorous $H$-theorem for this equation (and for the more general Enskog-type equations) can be found in \cite{Resib,BelLach91}. The $H$-function for the Boltzmann--Enskog equation slightly differs from the $H$-function for the Boltzmann equation.

Now let us consider a function $f$ which has the form of a sum of delta-functions:
\begin{equation}\label{Eqfdelta}
f(r_1,v_1,t;\Gamma)=n^{-1}\sum_{i=1}^N\delta(r_1-q_i(t,\Gamma),v_1-w_i(t,\Gamma)),
\end{equation}
where $q_1(t,\Gamma),w_1(t,\Gamma),\ldots, q_N(t,\Gamma),w_N(t,\Gamma)$ are the positions and velocities of $N$ hard spheres at the moment $t$. Here $n=\frac NV$. The time dependence is defined by the initial positions and velocities 
$$\Gamma=(q_1^0,w_1^0,\ldots, q_N^0,w_N^0)$$ 
and the laws of motion for hard spheres (i.e., law (\ref{EqHardColl}) for pairwise collisions and the free motion between the collisions; we neglect the collisions of three and more spheres simultaneously, because the corresponding initial phase points $\Gamma$ have zero measure as well as initial phase points which lead to the infinite number of collisions in a finite time \cite{CPG}).

\begin{proposition}[Bogolyubov]
The (generalized) function $f$ given by (\ref{Eqfdelta}) satisfies the Boltzmann--Enskog equation (\ref{EqBE}).
\end{proposition}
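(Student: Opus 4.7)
The plan is to verify (\ref{EqBE}) for the ansatz (\ref{Eqfdelta}) in the sense of generalized functions on $\mathbb T^3\times\mathbb R^3\times\mathbb R$, splitting the analysis into two regimes: the open time intervals on which no pair of spheres is in contact, and the isolated instants at which a pair collides.

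On an interval free of collisions, every $q_i$ moves with constant velocity $w_i$, so a direct distributional computation gives $\partial_t\delta(r_1-q_i(t),v_1-w_i(t))=-w_i\cdot\nabla_{r_1}\delta$; the $v_1$-delta lets me replace $w_i$ by $v_1$, which matches the streaming term $-v_1\cdot\partial_{r_1}f$. Substituting (\ref{Eqfdelta}) into $\St f$ produces a double sum over ordered pairs $(i,j)$ of terms containing $\delta(r_1-q_i(t))\delta(r_1\pm a\sigma-q_j(t))$; these two spatial deltas force $|q_i(t)-q_j(t)|=a$ for some pair, which fails throughout a non-collision interval, so $\St f\equiv0$ there.

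At a binary collision time $t_0$ between spheres $i$ and $j$, set $\sigma_0=(q_i(t_0)-q_j(t_0))/a$ and let $u_i,u_j$ and $u_i',u_j'$ denote the pre- and post-collision velocities related by (\ref{EqHardColl}). The jumps of $w_i,w_j$ across $t_0$ give $\partial_t f$ a $\delta(t-t_0)$ component equal to
\begin{equation*}
n^{-1}\bigl[\delta(r_1-q_i(t_0))(\delta(v_1-u_i')-\delta(v_1-u_i))+(i\leftrightarrow j)\bigr].
\end{equation*}
I then show that the $(i,j)$ term of $\St f$ contains the same singular piece: after integrating the spatial deltas in $\sigma$ and using one velocity delta to eliminate $v_2$, the remaining constraint $|q_i(t)-q_j(t)|=a$ can be rewritten, to first order near $t_0$, as $\delta(t-t_0)/\bigl|(u_j-u_i,\sigma_0)\bigr|$ times a factor, and the $(v_{21},\sigma)$ weight in (\ref{EqStBE}) exactly cancels this denominator. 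The sign constraint $(v_{21},\sigma)\geq0$ selects the loss piece as the pre-collision pair and, after the unit-Jacobian change of variables $(v_1,v_2)\to(v_1',v_2')$, selects the gain piece as the post-collision pair, thereby reproducing the above distribution.

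The main technical obstacle is this last bookkeeping: extracting the $\delta(t-t_0)$ from a product of two shifted spatial deltas combined with a velocity delta, and verifying that the geometric weight $na^2(v_{21},\sigma)$ of (\ref{EqStBE}) is exactly what is needed to cancel all Jacobians and factors of $n$ with no residual numerical factor. This precise cancellation, a hallmark of hard-sphere kinematics, is what makes the sum-of-deltas ansatz a genuine generalized solution of the Boltzmann--Enskog equation.
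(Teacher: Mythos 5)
Your free-streaming computation and the bookkeeping at a collision are the standard formal verification going back to Bogolyubov, and the arithmetic does come out right: the spherical delta gives $\int_{S^2}\delta^{(3)}(q_i-q_j-a\sigma)\,d\sigma=a^{-2}\delta(|q_i-q_j|-a)$, which cancels the $a^2$ in (\ref{EqStBE}), the factor $(v_{21},\sigma)$ cancels the Jacobian $\bigl|\tfrac{d}{dt}|q_i(t)-q_j(t)|\bigr|$ in the passage to $\delta(t-t_0)$, and $n\cdot n^{-2}=n^{-1}$ matches the prefactor of (\ref{Eqfdelta}). However, as a proof your argument has a genuine gap at exactly the point you label ``the main technical obstacle'' and then do not resolve: you interpret both sides as distributions in $(r_1,v_1,t)$, but the quadratic term is then a product of delta distributions whose singular supports intersect precisely at the contact configurations $|q_i(t)-q_j(t)|=a$ --- the only place where the collision integral contributes. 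Such a product is not defined by standard distribution theory, and the ambiguity is not cosmetic: the factor $\delta(v_1-w_i(t))\,\delta(v_2-w_j(t))$ is discontinuous in $t$ at $t_0$, so multiplying it by the extracted $\delta(t-t_0)$ requires a convention (pre- versus post-collision velocities); the constraint $(v_{21},\sigma)\geq0$ then selects different terms for the two conventions, and only one of the one-sided time derivatives of $f$ matches the right-hand side. Without declaring and justifying a regularization (and proving independence of it), the claimed identity of singular parts at $t=t_0$ is not a statement about well-defined objects.

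The paper's treatment shows that fixing this is the actual content of the proposition: it does not work with test functions of $(r_1,v_1,t)$ at all, but defines $f$, $\partial f/\partial t$ and the products $f(r_1,v_1,t)f(r_1\pm a\sigma,v_2,t)$ as functionals on test functions $D(\Gamma)$ of the \emph{initial phase point}, taken in the space $CC^1(\Omega)$ with the contact-continuity condition (\ref{EqCollCont}). Smeared in $\Gamma$, the quadratic term becomes the two-particle function $F_2$ evaluated at contact, every term is well defined, and the statement reduces to the first equation of the BBGKY hierarchy for hard spheres (\ref{EqBEB}); condition (\ref{EqCollCont}) is what removes the left/right derivative ambiguity you would otherwise face. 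To make your route rigorous you would either have to adopt such a framework, or carry out the delta-extraction with an explicit regularization (e.g., the approximate microscopic solutions of theorem~1) and control the limit --- merely asserting that the singular pieces match ``to first order near $t_0$'' leaves the essential step unproved.
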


A solution of form (\ref{Eqfdelta}) is called a microscopic solution of a kinetic equation. This result is surprising for two reasons. The first reason is that the Boltzmann--Enskog is regarded as an approximation to the $N$-particle dynamics in the case of a dilute gas ($n$ is small). This result shows that this equation has exact microscopic solutions as well. The Vlasov kinetic equation is also known to have microscopic solutions \cite{Vlasov}.

But, in contrast to the Vlasov equation, the Boltzmann--Enskog equation is time-irreversible and describes the entropy production. This proposition tells that it has solutions $f$, which are defined by the time-reversible dynamics of $N$ hard spheres. In other words, both $f(r_1,v_1,t;\Gamma)$ and $\widetilde f(r_1,v_1,t;\Gamma)=f(r_1,-v_1,-t;\Gamma)$ are solutions of the Boltzmann--Enskog equation, despite of the fact that this equation is time-irreversible! 

Indeed, $f(r_1,-v_1,-t;\Gamma)=f(r_1,v_1,t;\widetilde\Gamma)$, where $\widetilde\Gamma=(q_1^0,-w_1^0,\ldots,q_N^0,-w_N^0)$. So, the function $\widetilde f$ also has form (\ref{Eqfdelta}) and, hence, is a solution of (\ref{EqBE}).

\section{Rigorous formulation of the microscopic solutions of the Boltzmann--Enskog equation}

Let us explore the relation between the irreversible Boltzmann--Enskog equation (\ref{EqBE}) and its reversible microscopic solutions (\ref{Eqfdelta}). In this section we define solutions $f$ of form (\ref{Eqfdelta}) rigorously. 

Denote $G=\mathbb T^3\times\mathbb R^3$ (the single-particle phase space), $\Omega$ is the set of points $(r_1,v_1,\ldots,r_N,v_N)\in G^N$ such that $|r_i-r_j|\geq a$ for all $i\neq j$. Obviously, $\Gamma\in\Omega$, since hard sphere centers cannot be closer than $a$ to each other. Let $D(\Gamma)$ be a continuously differentiable function on $\Omega$ and rapidly decrease in $v_1,\ldots,v_N$ with all its first-order partial derivatives in $r$ and $v$. Let also the following condition be satisfied:
\begin{equation}\label{EqCollCont}
D(\ldots,r_i,v_i,\ldots,r_i-a\sigma,v_j,\ldots)=D(\ldots,r_i,v'_i,\ldots,r_i-a\sigma,v'_j,\ldots)
\end{equation}
for all $i\neq j,r_i,v_i,v_j,\sigma$ and other arguments (denoted as ``\ldots'') such that $(v_j-v_i,\sigma)\geq0$. Here $(r_i,v_i)$ and $(r_i-a\sigma,v_j)$ are $i$th and $j$th pairs of arguments of $D$, the other arguments are the same in both sides of the equalities, $v'_i$ and $v'_j$ are defined by (\ref{EqHardColl}) (with indexes 1 and 2 replaced by $i$ and $j$). Note that we can regard $D$ as a function on $G^N$ with $D=0$ if $|r_i-r_j|<a$ for some $i\neq j$.

Put by definition
\begin{equation}\label{EqProdDelta}
\bigl(\prod_{i=1}^N\delta(x_i-y_i(t,\Gamma)),D(\Gamma)\bigr)=S^{(N)}_{-t}D(x_1,\ldots,x_N)
\end{equation}
where $x_i=(r_i,v_i)$, $y_i(t)=(q_i(t),w_i(t))$. Here $S^{(N)}_t$ is the dynamic flow, i.e., $S^{(N)}_t(x_1,\ldots,x_N)=(X_1(t),\ldots,X_N(t))$, where $X_i(t)=(R_i(t),V_i(t))\in\mathbb T^3\times\mathbb R^3$ are the positions and velocities of $N$ hard spheres at the moment $t$ if their positions and momenta at the moment $t=0$ were $(x_1,\ldots,x_N)$. We use the notation 
$$S^{(N)}_tD(x_1,\ldots,x_N)\equiv D(S^{(N)}_t(x_1,\ldots,x_N)).$$
Here $S^{(N)}_t$ can be regarded as an operator which maps the function $D$ to the function with the shifted arguments.

The meaning of definition (\ref{EqProdDelta}) is the following. It is natural to define the expression $(\prod_{i=1}^N\delta(x_i-y_i(t,\Gamma)),D(\Gamma))$ as the value of $D$ in a point $\Gamma$ such that $y_i(t,\Gamma)=x_i$ for all $i=1,\ldots,N$. This is exactly $S^{(N)}_{-t}D(x_1,\ldots,x_N)$. 

If $D$ has the meaning of the initial probability density function, then $S^{(N)}_{-t}D\equiv D_t$ has the meaning of the probability density function at the time $t$.

Further,
$$\bigl(\delta(x_i-y_i(t,\Gamma)),D(\Gamma)\bigr)=
\int_{G^{N-1}} S^{(N)}_{-t}D(x_1,\ldots,x_N)\,dx_1\ldots dx_{i-1}dx_{i+1}\ldots dx_N.$$
So, we define the left-hand side of the expression as the integral from $D(\Gamma)$ over the initial values $\Gamma$ such that $y_i(t,\Gamma)=x_i$. This definition can be obtained by the formal integration of (\ref{EqProdDelta}) over all variables $x_1,\ldots,x_N$, except $x_i$. Analogously,
$$\bigl(\delta(x_i-y_i(t,\Gamma))\delta(x_j-y_j(t,\Gamma)),D\bigr)=
\int_{G^{N-2}} S^{(N)}_{-t}D\,dx_1\ldots dx_{i-1}dx_{i+1}\ldots dx_{j-1}dx_{j+1}\ldots dx_N.$$

Denote

\begin{equation}\label{EqFfD}\begin{aligned}
(f(x_1,t;\Gamma),D(\Gamma))&=F_1(x_1,t),\\
(f(x_1,t;\Gamma)f(x_2,t;\Gamma),D(\Gamma))&=F_2(x_1,x_2,t).
\end{aligned}\end{equation}
$F_1$ and $F_2$ are the single-particle and two-particle density functions correspondingly. If $D(x_1,\ldots,x_N)$ is symmetric relative to the permutations of $x_i$ (the particles are indistinguishable), then
\begin{equation}\label{EqF}\begin{aligned}
F_1(x_1,t)&=V\int_{G^{N-1}}S^{(N)}_{-t}D(x_1,\ldots,x_N)\,dx_2\ldots dx_N,\\
F_2(x_1,x_2,t)&=V\bigl(1-\frac 1N\bigr)\int_{G^{N-2}} S^{(N)}_{-t}D(x_1,\ldots,x_N)\,dx_3\ldots dx_N.
\end{aligned}\end{equation}

Condition (\ref{EqCollCont}) can be rewritten in terms of the two-particle density function $F_2$:
\begin{equation}\label{EqCollContF2}
F_2(r_1,v_1,r_1-a\sigma,v_2,t)=F_2(r_1,v'_1,r_1-a\sigma,v'_2,t)
\end{equation}
for all $r_1,v_1,v_2,\sigma$ and $t$ such that $(v_{21},\sigma)\geq0$.

Note that the operator $S^{(N)}_t$ is discontinuous, since the velocities change jump-like at the moments of collisions. But if condition (\ref{EqCollCont}) is satisfied, then $S^{(N)}_{-t-0}D=S^{(N)}_{-t+0}D$.

It would be desirable to define $f$ as a generalized function on some space of test functions. We can define $f$ as a functional on $\mathcal D(\Omega\backslash\partial\Omega)$ (infinitely differentiable functions with compact support in $\Omega\backslash\partial\Omega$ where $\partial\Omega$ is a boundary of $\Omega$). Equality (\ref{EqCollCont}) is trivially satisfied: both sides of it are equal to zero, since the points with $|r_i-r_j|=a$ for some $i\neq j$ belong to $\partial\Omega$. But the conditions imposed upon a function from $\mathcal D(\Omega\backslash\partial\Omega)$ (infinite differentiability and equality to zero on the boundary) are too restrictive, we want $f$ to be defined on a wider class of functions.

We can define $f$ as a functional on the space of continuously differentiable functions on $\Omega$ that rapidly decrease in $v_1,\ldots,v_N$. Denote this space as $SC^1(\Omega)$. A topology on $SC^1(\Omega)$ can be specified by the following seminorms:
$$P_K(D)=\max\{\sup_{\Omega}(1+v^2)^{\frac K2}|D|,\:\sup_{\Omega}(1+v^2)^{\frac K2}\left|\frac{\partial D}{\partial r_i}\right|,\:\sup_{\Omega}(1+v^2)^{\frac K2}\left|\frac{\partial D}{\partial v_i}\right|,\:i=1,2,\ldots,N\},$$
where $D\in SC^1(\Omega)$, $v^2=v_1^2+\ldots+v_N^2$, and $K=1,2,\ldots$. But in this case, we must impose additional functional condition (\ref{EqCollCont}). Let us regard $f$ as a functional (generalized function) on the subspace of $SC^1(\Omega)$ defined by additional condition (\ref{EqCollCont}). Denote this subspace as $CC^1(\Omega)$.

The meaning of condition (\ref{EqCollCont}) (or (\ref{EqCollContF2})) is easy to understand if $D$ is a probability density function. This is a continuity condition: since the collision of hard spheres is instantaneous, the two-particle density just after the collision should be equal to the density just before the collision. If this condition is not satisfied, then the function $D_t$ is, generally speaking, discontinuous in $x_1,\ldots,x_N$. So, we demand the continuity not only for the initial density function, but for the density function at every moment of time as well. Also, this condition means the continuity of $D_t$ in $t$.

In fact, we can drop condition (\ref{EqCollCont}). But this case is more sophisticated to analyze. For simplicity we will assume condition (\ref{EqCollCont}).

Now we are able to formulate proposition~1 as a rigorous assertion.

\begin{proposition}
The (generalized) function $f$ given by (\ref{Eqfdelta}) satisfies the Boltzmann--Enskog equation (\ref{EqBE}). The both sides of the equation are understood as functionals on $CC^1(\Omega)$.
\end{proposition}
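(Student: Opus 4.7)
The plan is to verify (\ref{EqBE}) as a pointwise identity after pairing both sides with an arbitrary $D\in CC^1(\Omega)$. By (\ref{EqFfD}), pairing the left-hand side gives $\partial_tF_1+v_1\cdot\nabla_{r_1}F_1$, while pairing the right-hand side gives the collision integral (\ref{EqStBE}) with each product $f\cdot f$ replaced by $F_2$. Without loss of generality I may assume $D$ is symmetric under permutations of particles (since $f$ itself sums symmetrically over them), so the explicit formulas (\ref{EqF}) apply, and the task reduces to deriving the resulting PDE for $F_1$ from hard-sphere dynamics.

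The central input is that $D_t:=S^{(N)}_{-t}D$ satisfies the free-transport Liouville equation $\partial_tD_t+\sum_{i=1}^N v_i\cdot\nabla_{r_i}D_t=0$ in the interior of $\Omega$ and, by (\ref{EqCollCont}) (equivalently, as already noted in the excerpt, $S^{(N)}_{-t-0}D=S^{(N)}_{-t+0}D$), remains continuous across every hard-sphere contact manifold $|r_i-r_j|=a$. I would differentiate $F_1(x_1,t)=V\int D_t\,\chi_\Omega\,dx_2\cdots dx_N$ in $t$, substitute the Liouville equation, and treat the $i=1$ and $i\geq 2$ terms separately. The $i=1$ term, combined with the $r_1$-dependence of $\chi_\Omega$, produces $-v_1\cdot\nabla_{r_1}F_1$ plus a surface integral over $|r_j-r_1|=a$ for $j\geq 2$. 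Each $i\geq 2$ term, after integration by parts in $r_i$ (the torus has no boundary of its own), produces a surface integral over $|r_i-r_k|=a$ for $k\neq i$. Contributions from contact manifolds $|r_i-r_k|=a$ with $i,k\geq 2$ cancel pairwise: summing the $(i,k)$ and $(k,i)$ integrands produces a factor $(v_i-v_k,\sigma_{ik})D_t$ on $|r_i-r_k|=a$, which integrates to zero by (\ref{EqCollCont}) together with the involutive, measure-preserving collision map (\ref{EqHardColl}). Only the surface integrals on $|r_j-r_1|=a$ survive.

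Parametrising these by $\sigma=(r_1-r_j)/a\in S^2$ (so $r_j=r_1-a\sigma$, matching the paper's convention), collapsing the $j$-sum by the symmetry of $D$, and combining the combinatorial factor with the normalisations in (\ref{EqF}) to produce $n=N/V$, one obtains
$$\partial_tF_1+v_1\cdot\nabla_{r_1}F_1=-na^2\!\int_{S^2\times\mathbb R^3}(v_{21},\sigma)\,F_2(r_1,v_1,r_1-a\sigma,v_2,t)\,d\sigma\,dv_2.$$
To reach the right-hand side of (\ref{EqBE}), split by the sign of $(v_{21},\sigma)$. The portion $(v_{21},\sigma)\geq 0$ is minus the loss term of (\ref{EqStBE}). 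On the portion $(v_{21},\sigma)<0$, the substitution $\sigma\mapsto-\sigma$ (which preserves the surface measure on $S^2$) gives $+na^2\!\int_{(v_{21},\sigma)>0}(v_{21},\sigma)F_2(r_1,v_1,r_1+a\sigma,v_2,t)\,d\sigma\,dv_2$. Condition (\ref{EqCollContF2}), which by the involution of the collision map holds for all $\sigma$ at both $r_1\pm a\sigma$, then rewrites this as $+na^2\!\int(v_{21},\sigma)F_2(r_1,v'_1,r_1+a\sigma,v'_2,t)\,d\sigma\,dv_2$, the gain term of (\ref{EqStBE}). The main obstacle throughout is the rigorous justification of the integration by parts in the presence of the discontinuous flow $S^{(N)}_t$ and the $r_1$-dependent exclusion domain; condition (\ref{EqCollCont}) is designed precisely to handle both issues, performing the double duty of securing cancellation of interior contact-manifold contributions and implementing the loss-to-gain identification at the end.
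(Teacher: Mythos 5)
Your proposal is correct and follows essentially the same route as the paper: pairing both sides with $D\in CC^1(\Omega)$ reduces the claim, via (\ref{EqFfD})--(\ref{EqF}), to the first equation of the hard-sphere BBGKY hierarchy (\ref{EqBEB}) for $F_1$ and $F_2$, which is exactly how the paper argues, except that the paper simply cites this equation (Cercignani) while noting that condition (\ref{EqCollCont}) enters its derivation. The only difference is that you re-derive (\ref{EqBEB}) from the hard-sphere transport dynamics --- Leibniz and divergence-theorem surface terms on the contact manifolds, cancellation of the $i,k\geq 2$ contributions via the measure-preserving involution (\ref{EqHardColl}) together with (\ref{EqCollCont}), and the loss-to-gain identification through (\ref{EqCollContF2}) extended to $r_1+a\sigma$ by symmetry of $F_2$ --- and this sketch, including the signs, the factor $na^2$, and the symmetrization of $D$ (legitimate since the flow is permutation-equivariant and $CC^1(\Omega)$ is stable under symmetrization), is sound.
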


\begin{proof}
The single-particle and two-particle density functions for the system of $N$ hard spheres are known to satisfy the following equation (the first equation of the BBGKY hierarchy for hard spheres, see \cite{Cerc}; condition (\ref{EqCollCont}) is used in the derivation of this equation):
\begin{multline}\label{EqBEB}
\frac{\partial F_1(r_1,v_1,t)}{\partial t}=-v_1\frac{\partial F_1(r_1,v_1,t)}{\partial r_1}+na^2\int_{(v_{21},\sigma)\geq0}
(v_{21},\sigma)[F_2(r_1,v'_1,r_1+a\sigma,v'_2,t)\\-F_2(r_1,v_1,r_1-a\sigma,v_1,t)]
d\sigma dv_2,
\end{multline}
where $n=\frac NV$. Here the argument $r_1\pm a\sigma$ means $r_1\pm a\sigma\pm0$, since $F_2(r_1,v_1,r_2,v_2,t)=0$ as $|r_1-r_2|<a$. Due to (\ref{EqFfD}) this equation can be written as
$$\int\bigl\{\frac{\partial f(r_1,v_1,t;\Gamma)}{\partial t}+v_1\frac{\partial f(r_1,v_1,t;\Gamma)}{\partial r_1}-\St f\bigr\}D(\Gamma)\,d\Gamma=0,$$
where $\St f$ is given by (\ref{EqStBE}). Since $D(\Gamma)$ is an arbitrary function from $CC^1(\Omega)$, this equality means that $f$ satisfies the Boltzmann--Enskog equation in the stated sense.
\end{proof}

It should be noted that the microscopic solutions of the Boltzmann--Enskog equation are understood slightly differently than the microscopic solutions of the Vlasov equation. In the case of the Vlasov equation, a generalized function in form (\ref{Eqfdelta}) is understood as a functional on $\mathcal D(G)$ (infinitely differentiable functions on $G$ with compact support). So, if $A\in\mathcal D(G)$, then
$$(\delta(r-q_i(t,\Gamma),v-w_i(t,\Gamma)),A(r,v))=A(q_i(t,\Gamma),w_i(t,\Gamma)).$$
Thus, test functions depend on $r$ and $v$. But it is difficult to give a sense to the expressions $f(r_1,v_1,t)f(r_1-a\sigma,v_2,t)$ and $\frac{df}{dt}$ in the Boltzmann--Enskog equation as functionals on $\mathcal D(G)$. For this reason, they are defined on $CC^1(\Omega)$. In particular, test functions depend on $\Gamma\in\Omega$. In this case, all terms can be well-defined.

It could be seemed that the microscopic solutions of the Boltzmann--Enskog equation are just a paraphrase of equation (\ref{EqBEB}). However, the following theorem reveal another view on the microscopic solutions.

\section{Approximate microscopic solutions}

\begin{theorem}
Consider the function
\begin{equation*}
f^0_\varepsilon(r_1,v_1)=n^{-1}\sum_{i=1}^N\delta^0_\varepsilon(r_1-q_i^0,v_1-w_i^0),\end{equation*}
where $\delta^0_\varepsilon(r,v)\in  C^1_0(G)$ (i.e., $\delta^0_\varepsilon$ is a continuously differentiable function in $G$ with compact support), $\delta^0_\varepsilon(r,v)\to\delta(r,v)$ and the support of $\delta^0_\varepsilon$ tends to $\{0,0\}$ as $\varepsilon\to0$. The generalized function $\delta(r,v)$ is defined on the space $\mathcal D(G)$ of test functions. Let $|q_i^0-q_j^0|>a$ for all $i\neq j$. 

1) Then, for every $T>0$ there exist a number $\varepsilon_T>0$ and a family of functions $D_\varepsilon\in CC^1(\Omega)$, $\varepsilon>0$, such that there exists a solution $f_\varepsilon(r_1,v_1,t)\in C^1(G\times[-T,T])$ of the Cauchy problem for the Boltzmann--Enskog equation with the initial data $f^0_\varepsilon$. This solution coincides with $F_{1,\varepsilon}(r_1,v_1,t)$ on $(r_1,v_1,t)\in G\times[-T,T]$ for all $\varepsilon<\varepsilon_T$. Here $F_{1,\varepsilon}$ is defined by the first formula of (\ref{EqF}) with $D$ substituted by $D_\varepsilon$.

2) Consider an arbitrary time moment $t\in[-T,T]$ such that $|q_i(t,\Gamma)-q_j(t,\Gamma)|>a$ for all $i\neq j$. Here $q_i(t,\Gamma)$, $w_i(t,\Gamma)$, $i=1,\ldots,N$, and $\Gamma$ are defined as before. Then,
$$\lim_{\varepsilon\to0}f_\varepsilon(r_1,v_1,t)=n^{-1}\sum_{i=1}^N\delta(r_1-q_i(t,\Gamma), v_1-w_i(t,\Gamma)).$$
\end{theorem}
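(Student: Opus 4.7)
My plan is to construct $D_\varepsilon$ as a symmetrised product of smoothed deltas centred at the points of $\Gamma$, show that its first marginal $F_{1,\varepsilon}$ is a classical $C^1$ solution of (\ref{EqBE}) on $[-T,T]$ for $\varepsilon$ small, and then pass to the limit to obtain the convergence statement in part~(2). Concretely, I would take
\[
D_\varepsilon(x_1,\ldots,x_N)\;=\;\frac{1}{N!}\sum_{\pi}\prod_{i=1}^{N}\delta^0_\varepsilon(x_i-\Gamma_{\pi(i)}),
\]
where $\Gamma_i=(q_i^0,w_i^0)$ and $\pi$ runs over permutations of $\{1,\ldots,N\}$. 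Set $d_0=\min_{i\neq j}(|q_i^0-q_j^0|-a)>0$. As soon as the diameter of the support of $\delta^0_\varepsilon$ is smaller than $d_0/2$, the support of $D_\varepsilon$ stays at positive distance from $\partial\Omega$, so $D_\varepsilon$ is $C^1$ with compact support in $\Omega$ and condition (\ref{EqCollCont}) holds trivially (both sides vanish); hence $D_\varepsilon\in CC^1(\Omega)$. A combinatorial computation from (\ref{EqF}) using the symmetry of $D_\varepsilon$ then gives $F^0_{1,\varepsilon}=f^0_\varepsilon$, which secures the required initial datum.

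For part~(1), define $f_\varepsilon:=F_{1,\varepsilon}$. Since the exceptional $\Gamma$'s form a set of zero measure, I may assume the hard-sphere orbit $t\mapsto S^{(N)}_t\Gamma$ has only finitely many binary collisions on $[-T,T]$ and depends continuously on initial data near $\Gamma$. I then choose $\varepsilon_T>0$ small enough so that for every $\varepsilon<\varepsilon_T$ and $|t|\leq T$ the support of $S^{(N)}_{-t}D_\varepsilon$ is a disjoint union of $N!$ narrow tubes around the $N!$ permutations of $S^{(N)}_t\Gamma$, with distinct particles' wave-packets separated in configuration space outside the isolated binary-collision windows. Condition (\ref{EqCollCont}) is preserved by the flow (matching pre- and post-collision values of $D_{\varepsilon,t}$), and together with the piecewise smoothness of $S^{(N)}_t$ between collisions this yields $F_{1,\varepsilon}\in C^1(G\times[-T,T])$. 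By the BBGKY identity (\ref{EqBEB}) it remains to verify the molecular-chaos closure
\[
F_{2,\varepsilon}(r_1,v_1,r_1\pm a\sigma,v_2,t)=F_{1,\varepsilon}(r_1,v_1,t)\,F_{1,\varepsilon}(r_1\pm a\sigma,v_2,t).
\]
This is the decisive step: because $a\gg\varepsilon$, the diagonal $i=j$ contribution to the product vanishes, so both sides are supported on cross-pairs $i\neq j$; for such pairs the remaining $N-2$ wave-packets live in widely separated regions of $G$ and do not correlate with the colliding pair, so that marginalising $D_{\varepsilon,t}$ over the other $N-2$ coordinates yields precisely $F_{1,\varepsilon}\otimes F_{1,\varepsilon}$. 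Substituting this closure into (\ref{EqBEB}) reduces it to (\ref{EqBE}), proving that $f_\varepsilon$ is a genuine classical $C^1$ solution of the Boltzmann--Enskog equation with initial datum $f^0_\varepsilon$.

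For part~(2), fix $t\in[-T,T]$ with $|q_i(t,\Gamma)-q_j(t,\Gamma)|>a$ for all $i\neq j$. The support picture from part~(1) gives $F_{1,\varepsilon}(\,\cdot\,,t)=n^{-1}\sum_i\psi^\varepsilon_i(\,\cdot\,,t)$ with each $\psi^\varepsilon_i$ a smooth $L^1$-normalised packet concentrated in an $O(\varepsilon)$-neighbourhood of $y_i(t,\Gamma)=(q_i(t,\Gamma),w_i(t,\Gamma))$. Pairing with any $A\in\mathcal D(G)$ and using continuity of $A$ at each $y_i(t,\Gamma)$ gives $(\psi^\varepsilon_i(\,\cdot\,,t),A)\to A(y_i(t,\Gamma))$ as $\varepsilon\to0$, whence $F_{1,\varepsilon}(\,\cdot\,,t)\to n^{-1}\sum_i\delta(\,\cdot\,-y_i(t,\Gamma))$ in the distributional sense, which is the claimed convergence.

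The principal obstacle is the uniform geometric control needed for the closure step: $\varepsilon_T$ must be selected so that throughout $[-T,T]$ no three evolved wave-packets interact simultaneously, every binary collision event for data in the support of $D_\varepsilon$ is a small perturbation of the corresponding event for $\Gamma$, and the $N-2$ non-colliding packets remain separated from the colliding pair. Because $\Gamma$ has only finitely many binary collisions on $[-T,T]$ and the hard-sphere flow is continuous off a measure-zero set of initial data, such an $\varepsilon_T$ exists; extracting the requisite quantitative estimates from this qualitative picture, and confirming that the marginalisation genuinely factors on the collision hypersurface, is the technical heart of the argument.
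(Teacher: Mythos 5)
Your proposal is correct and follows essentially the same route as the paper: the symmetrised product $D_\varepsilon$ of smoothed deltas, the observation that for small $\varepsilon$ its support avoids $\partial\Omega$ so that condition (\ref{EqCollCont}) holds trivially and $D_\varepsilon\in CC^1(\Omega)$, the identity $F^0_{1,\varepsilon}=f^0_\varepsilon$, the molecular-chaos factorization $F_{2,\varepsilon}=F_{1,\varepsilon}F_{1,\varepsilon}$ at separations $\geq a$ reducing the first BBGKY equation (\ref{EqBEB}) to (\ref{EqBE}), and the packet-concentration argument for part~2. The ``technical heart'' you flag (uniform control of the evolved packet supports on $[-T,T]$) is treated in the paper at the same qualitative level, via the choice of $\varepsilon_T$ so that each evolved packet $\delta^i_\varepsilon(\cdot,t,\Gamma)$ stays supported in $B_a\times\mathbb R^3$.
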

\begin{proof}
Consider the density function

$$D_\varepsilon(\Lambda)=\frac1{N!}\sum_{(j_1,\ldots,j_N)}\prod_{i=1}^N\delta^0_\varepsilon(\xi_i^0-q_{j_i}^0, u_i^0-w_{j_i}^0),$$
where $\Lambda=(\xi^0_1,u^0_1,\ldots,\xi^0_N,u^0_N)\in G^N$, the sum is taken over all permutations $j_1,\ldots,j_N$ of $1,\ldots,N$. Due to the condition $|q_i^0-q_j^0|>a$ for all $i\neq j$ and the conditions imposed upon $\delta^0_\varepsilon$, for a given $T$, there exists $\varepsilon_0>0$ such that $D_\varepsilon(\Lambda)=0$ whenever $|\xi^0_i-\xi^0_j|\leq a$ for some $i\neq j$, for all $\varepsilon<\varepsilon_0$. Hence, we can treat $D_\varepsilon$ as a function on $\Omega\ni\Lambda$. In this case, condition (\ref{EqCollCont}) is also satisfied, since both sides of it are equal to zero. Note that by the same reason, condition (\ref{EqCollCont}) is also satisfied for the derivatives of $D_\varepsilon$ as well. Hence, $D_{t,\varepsilon}\equiv S^{(N)}_{-t}D_\varepsilon$ also belongs to $C^1(\Omega)$ for all $t$ and is continuously differentiable in $t$ as well.

Obviously, $F^0_{1,\varepsilon}(r_1,v_1)\equiv f^0_\varepsilon(r_1,v_1)$, where $F^0_{1,\varepsilon}$ is defined by the first formula in (\ref{EqF}) for $t=0$ and $D$ substituted by $D_\varepsilon$. 

For an arbitrary $t\in\mathbb R$, we have
$$D_{t,\varepsilon}(\Lambda)\equiv S^{(N)}_{-t}D_\varepsilon(\Lambda)=\frac1{N!}\sum_{(j_1,\ldots,j_N)}\prod_{i=1}^N\delta^0_\varepsilon(\xi_i(-t,\Lambda)-q_{j_i}^0, u_i(-t,\Lambda)-w_{j_i}^0),$$
where $\xi_1(-t,\Lambda),u_1(-t,\Lambda),\ldots,\xi_N(-t,\Lambda),u_N(-t,\Lambda)$ are the positions and momenta of $N$ hard spheres whenever their initial positions and momenta were $\Lambda$. There exists $\varepsilon_1>0$ (let $\varepsilon_1\leq\varepsilon_0$) such that the last formula can be expressed as
$$D_{t,\varepsilon}(\Lambda)=\frac1{N!}\sum_{(j_1,\ldots,j_N)}\prod_{i=1}^N\delta^{j_i}_\varepsilon(\xi_i^0-q_{j_i}(t,\Gamma), u_i^0-w_{j_i}(t,\Gamma),t,\Gamma)$$
whenever $\varepsilon<\varepsilon_1$. Here $\delta^i_\varepsilon(r,v,t,\Gamma)\to\delta(r,v)$ as $\varepsilon\to0$ for all $i=1,\ldots,N$ and for all $t$ such that $|q_i(t,\Gamma)-q_j(t,\Gamma)|>a$, $i\neq j$. Intuitively: if $\varepsilon$ is sufficiently small and the points $\Lambda$ with nonzero $D(\Lambda)$ are concentrated near $\Gamma$, it is almost equivalent to shift $\Lambda$ on $-t$ along the dynamical flow or to shift $\Gamma$ on $t$ along the same flow. For a given $T>0$, there exists $\varepsilon_T>0$ (let $\varepsilon_T\leq\varepsilon_1$) such that the support of $\delta^i_\varepsilon(r,v,t)$ lies in $B_a\times\mathbb R^3$ for all $i$, $t\in[-T,T]$, and $\varepsilon<\varepsilon_T$. Here $B_a=\{(r,v)\in G:\,|r|<a\}$. 

The functions $F_1$ and $F_2$ given by (\ref{EqF}) with $D$ substituted by $D_\varepsilon$ have the forms
$$F_{1,\varepsilon}(x_1,t)=n^{-1}\sum_{i=1}^N\delta^i_\varepsilon(r_1-q_i(t,\Gamma),v_1-w_i(t,\Gamma),t,\Gamma),$$
\begin{equation*}\begin{split}F_{2,\varepsilon}(x_1,x_2,t)=n^{-2}\bigl(1-\frac1N\bigr)^{-1}\sum_{i\neq j}
&\delta^i_\varepsilon(r_1-q_i(t,\Gamma), v_1-w_i(t,\Gamma),t,\Gamma)\\\times&\delta^j_\varepsilon(r_2-q_j(t,\Gamma), v_2-w_j(t,\Gamma),t,\Gamma).\end{split}\end{equation*}

If $\varepsilon<\varepsilon_T$, then \begin{equation}\label{EqChaos}F_{2,\varepsilon}(x_1,x_2,t)=\begin{cases}F_{1,\varepsilon}(x_1,t)F_{1,\varepsilon}(x_2,t),&\text{if } |r_2-r_1|\geq a,\\0,&\text{otherwise}\end{cases}\end{equation}
for all $t\in[-T,T]$.

The functions $F_1$ and $F_2$ satisfy equation (\ref{EqBE}). But in view of (\ref{EqChaos}), equation (\ref{EqBE}) is reduced to the Boltzmann--Enskog equation for $F_{1,\varepsilon}$. Remind that $F_{1,\varepsilon}(x_1,0)\equiv F^0_{1,\varepsilon}(x_1)=f^0_\varepsilon(x_1)$. Hence, $F_{1,\varepsilon}(x_1,t)$ is a solution of the Cauchy problem for the Boltzmann--Enskog equation with the initial data $f^0_\varepsilon$ whenever $\varepsilon<\varepsilon_T$. $F_{1,\varepsilon}\in C^1(G\times[-T,T])$ in force of the continuous differentiability of $D_{t,\varepsilon}(\Lambda)$ in $t$ and $\Lambda$. The assertion of paragraph 1) of the theorem has been proved. The assertion of paragraph 2) immediately follows from the facts that $f_\varepsilon\equiv F_{1,\varepsilon}$ and $\delta^i_\varepsilon(r,v,t,\Gamma)\to\delta(r,v)$ as $\varepsilon\to0$ for all $i=1,\ldots,N$ and for all $t$ such that $|q_i(t,\Gamma)-q_j(t,\Gamma)|>a$, $i\neq j$.\end{proof}

So, if the density function in the Boltzmann--Enskog equation has the form of a sum of continuous delta-like distributions, then these distributions behave as ``spreaded hard spheres'': they collide with each other, move freely between the collisions, and gradually spread. The Boltzmann--Enskog equation (\ref{EqBE}) coincide with the first equation of the BBGKY hierarchy (\ref{EqBEB}). When the spatial spreading of the hard sphere centers exceeds the hard sphere diameter, equations (\ref{EqBE}) and (\ref{EqBEB}) cease to coincide. Note that this moment $T$ increases as $\varepsilon$ decreases. If $\varepsilon=0$ and $f_\varepsilon$ is a sum of delta-functions (\ref{Eqfdelta}), then $T=\infty$. This is the case of propositions~1 and~2. So, the approximate microscopic solutions turns to the exact microscopic solutions in the limit.

The considered solutions $f_\varepsilon$ can be called the ``approximate microscopic solutions''. Like the exact microscopic solutions, they are reversible, but only on the bounded time segments.

So, the microscopic solutions of the Boltzmann--Enskog equation are not just a paraphrase of another equation (\ref{EqBEB}). Undoubtedly, the microscopic solutions reflect some fundamental properties of the Boltzmann--Enskog equation. They have the following meaning: they describe the limiting motion of the centers of delta-like distributions in the approximate microscopic solutions.

Note also that equality (\ref{EqChaos}) is an analogue of the molecular chaos condition (or Stoss\-zahl\-an\-satz) for the Boltzmann--Enskog equation.

\section{Reversible and irreversible classes of solutions of the Boltzmann--Enskog equation}

Now we are able to answer the question about the relation between the irreversible Boltzmann--Enskog equation (\ref{EqBE}) and its reversible solutions (\ref{Eqfdelta}). For generality let us denote $\Delta$ the generalized functions in form (\ref{Eqfdelta}) with arbitrary functions $q_i(t,\Gamma)$, $w_i(t,\Gamma)$, $i=1,\ldots,N$. 
\begin{theorem}
1) Let us treat the Boltzmann--Enskog equation (\ref{EqBE}) as an equation for functions $f\in C^1(G\times[0,T))$, $T>0$ (the classical solution of the Boltzmann--Enskog equation is defined in this class or its subclass \cite{Polew,HaNoh}). In this case, the equation is irreversible, i.e., if some function $f(r_1,v_1,t)$ is a solution of this equation, then, in general, the function $\widetilde f(r_1,v_1,t)=f(r_1,-v_1,-t)$ is not a solution;

2) Let us treat the Boltzmann--Enskog equation as an equation for generalized functions $f\in\Delta$ on test functions from $CC^1(\Omega)$. In this case, the equation is reversible, i.e., if some generalized function $f(r_1,v_1,t;\Gamma)$ is a solution of this equation, then the generalized function $\widetilde f(r_1,v_1,t;\Gamma)=f(r_1,-v_1,-t;\Gamma)$ is also a solution.

3) The continuously differentiable solutions $f_\varepsilon(r_1,v_1,t)$ constructed in theorem~1 (sums of delta-like functions) are reversible on time segments $[-T,T]$, where $T>0$ depends on $\varepsilon>0$.
\end{theorem}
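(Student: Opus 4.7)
The plan is to handle the three parts separately, treating part~2 as almost immediate from Proposition~2, part~3 as a corollary of the construction in Theorem~1 together with the reversibility of the hard-sphere flow $S^{(N)}_t$, and part~1 as a consequence of the $H$-theorem cited in \cite{Resib,BelLach91}.

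For part~2, I would simply make precise the observation already sketched after Proposition~1. Given a microscopic solution $f(r_1,v_1,t;\Gamma)$ of the form (\ref{Eqfdelta}) built from the hard-sphere flow with initial data $\Gamma=(q_1^0,w_1^0,\ldots,q_N^0,w_N^0)$, set $\widetilde\Gamma=(q_1^0,-w_1^0,\ldots,q_N^0,-w_N^0)$. Reversibility of the hard-sphere dynamics yields $(q_i(t,\widetilde\Gamma),w_i(t,\widetilde\Gamma))=(q_i(-t,\Gamma),-w_i(-t,\Gamma))$ (with the usual caveat about sets of initial data of measure zero leading to triple/accumulation collisions). Substituting this into (\ref{Eqfdelta}) gives $f(r_1,v_1,t;\widetilde\Gamma)=f(r_1,-v_1,-t;\Gamma)=\widetilde f(r_1,v_1,t;\Gamma)$. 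Since $\widetilde f$ is then a generalized function of form (\ref{Eqfdelta}) built from the initial data $\widetilde\Gamma$, Proposition~2 immediately says it solves (\ref{EqBE}) in the sense of functionals on $CC^1(\Omega)$.

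For part~3, I would lift the same trick to the smeared density $D_\varepsilon$ constructed in Theorem~1. Define $\widetilde D_\varepsilon$ by the same formula but with $\Gamma$ replaced by $\widetilde\Gamma$, and let $\widetilde f_\varepsilon$ be the associated approximate microscopic solution of (\ref{EqBE}) furnished by Theorem~1 on $[-T,T]$. Using the identity $S^{(N)}_{-t}\widetilde D_\varepsilon(\Lambda)=S^{(N)}_{t}D_\varepsilon(R\Lambda)$, where $R$ is the velocity-reversal map on $G^N$, and the fact that the first marginal (\ref{EqF}) commutes with $R$ on the single-particle velocity, one obtains $\widetilde f_\varepsilon(r_1,v_1,t)=f_\varepsilon(r_1,-v_1,-t)$ for every $t\in[-T,T]$. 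Thus $\widetilde f_\varepsilon$ and the time-reversed $f_\varepsilon$ coincide pointwise in the stated range, proving reversibility on the bounded interval. The uniqueness of the classical Cauchy problem in the class $C^1(G\times[-T,T])$ \cite{Polew,HaNoh} is what lets me identify the two.

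For part~1, my plan is to argue by contradiction using the $H$-theorem for the Boltzmann--Enskog equation \cite{Resib,BelLach91}. The $H$-functional has the form $H[f](t)=\int f\log f\,(\cdots)\,dr_1dv_1$ with a factor depending only on $|r_1-r_2|$, hence it is invariant under $v_1\mapsto -v_1$; consequently $H[\widetilde f](t)=H[f](-t)$, so if $H[f]$ is strictly monotone non-increasing on $[0,T)$ then $H[\widetilde f]$ is strictly monotone non-decreasing on $(-T,0]$. But any $C^1$ solution of (\ref{EqBE}) must have $H$ non-increasing by the $H$-theorem, and a contradiction follows as soon as one exhibits a non-equilibrium classical solution (e.g.\ a small smooth perturbation of a local Maxwellian with non-constant temperature). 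The main obstacle here is the need to rule out the degenerate case in which $f$ is itself an equilibrium: the claim is about reversibility of the \emph{equation}, so a single non-equilibrium example is enough, and I would merely cite the known existence theory for classical solutions in \cite{Polew,HaNoh} to supply it. Parts~2 and~3 are essentially bookkeeping once part~2 is reduced to Proposition~2; the delicate point in part~3 is only ensuring that $T$ is small enough that no ``spreading'' of the $\delta^i_\varepsilon$ disturbs the identification with the reversed solution, which is already guaranteed by the choice of $\varepsilon_T$ in Theorem~1.
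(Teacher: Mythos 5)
Your proposal is essentially correct, but it is organized quite differently from the paper's proof. The paper gives a single unified computation: substituting $\widetilde f(r_1,v_1,t)=f(r_1,-v_1,-t)$ into (\ref{EqBE}) and changing $\sigma\to-\sigma$ shows that $\widetilde f$ solves the equation if and only if the pointwise contact condition (\ref{EqCollContf}) holds; part 1 then follows because a classical solution in general violates (\ref{EqCollContf}), part 2 because for $f\in\Delta$ this condition is by definition the continuity condition (\ref{EqCollContF2}), i.e.\ exactly the constraint (\ref{EqCollCont}) built into the test space $CC^1(\Omega)$, and part 3 because $f_\varepsilon(x_1,t)f_\varepsilon(x_2,t)=F_{2,\varepsilon}(x_1,x_2,t)$ on $[-T,T]$ by (\ref{EqChaos}). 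You instead prove part 2 by the flow-reversal argument ($\widetilde f(\cdot,t;\Gamma)=f(\cdot,t;\widetilde\Gamma)$ plus Proposition 2), which is the informal remark the paper itself makes after Proposition 1; to make it rigorous at the level of functionals you still need to check that velocity reversal $D\mapsto D\circ R$ maps $CC^1(\Omega)$ into itself, which holds because the collision map is an involution exchanging the regions $(v_{21},\sigma)\geq 0$ and $(v_{21},\sigma)\leq 0$, so condition (\ref{EqCollCont}) is preserved. Your part 3 is sound and in fact simpler than you state: the identity $S^{(N)}_{-t}(D_\varepsilon\circ R)=(S^{(N)}_{t}D_\varepsilon)\circ R$ together with the marginal formulas (\ref{EqF}) identifies $\widetilde f_\varepsilon$ with the Theorem 1 solution for the reversed data directly, so the appeal to uniqueness of the classical Cauchy problem (not obviously available in exactly this class from \cite{Polew,HaNoh}) is unnecessary. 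Your part 1 via the $H$-theorem of \cite{Resib,BelLach91} is a legitimate alternative but heavier: you must time-translate the reversed solution into a forward interval, make sure $H$ is finite and the $H$-theorem applies to the solutions you invoke, and, crucially, exhibit a classical solution with strictly positive entropy production at some time; since the entropy production vanishes precisely when the contact identity (\ref{EqCollContf}) holds, this route ultimately rests on the same ``generic failure'' fact that the paper isolates explicitly, and neither you nor the paper produces a fully explicit counterexample. What the paper's approach buys is the single criterion (\ref{EqCollContf}), which makes transparent why the test-function condition (\ref{EqCollCont}) restores reversibility; what your modular route buys is that parts 2 and 3 avoid that computation entirely, at the cost of leaning on external machinery in part 1.
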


\begin{proof}
Since $\frac{\partial\widetilde f(r_1,v_1,t)}{\partial t}=\frac{\partial f(r_1,-v_1,-t)}{\partial t}=
-\frac{\partial f(r_1,-v_1,-t)}{\partial(-t)}$, we have

\begin{equation*}\begin{split}
\frac{\partial\widetilde f(r_1,v_1,t)}{\partial t}=
-v_1\frac{\partial f(r_1,-v_1,-t)}{\partial r_1}-a^2\int_{(-v_{21},\sigma)\geq0}&|(v_{21},\sigma)|
[f(r_1,-v'_1,-t)f(r_1+a\sigma,-v'_2,-t)\\&-f(r_1,-v_1,-t)f(r_1-a\sigma,-v_2,-t)]
d\sigma dv_2.
\end{split}
\end{equation*}
Replace $f(r,-v,-t)=\widetilde f(r,v,t)$ and change the variable $\sigma\to-\sigma$:
\begin{equation*}\begin{split}
\frac{\partial\widetilde f(r_1,v_1,t)}{\partial t}=-v_1\frac{\partial\widetilde f(r_1,v_1,t)}{\partial r_1}+
a^2\int_{(v_{21},\sigma)\geq0}(v_{21},\sigma)
[&\widetilde f(r_1,v_1,t)\widetilde f(r_1+a\sigma,v_2,t)\\-&\widetilde f(r_1,v'_1,t)\widetilde f(r_1-a\sigma,v'_2,t)]d\sigma dv_2.
\end{split}\end{equation*}
The Boltzmann--Enskog equation is obtained from this equation by swapping the velocities $v_{1,2}$ and $v'_{1,2}$ in the collision integral. Hence, the function $\widetilde f$ satisfies the Boltzmann--Enskog equation if and only if
\begin{equation*}
\widetilde f(r_1,v_1,t)\widetilde f(r_1-a\sigma,v_2,t)=\widetilde f(r_1,v'_1,t)\widetilde f(r_1-a\sigma,v'_2,t),
\end{equation*}
or, equivalently,
\begin{equation}\label{EqCollContf}
f(r_1,v_1,t)f(r_1-a\sigma,v_2,t)=f(r_1,v'_1,t)f(r_1-a\sigma,v'_2,t),
\end{equation}
for all $r_1$, $v_1$, $\sigma$ and $t$.

Condition (\ref{EqCollContf}) is, in general, not satisfied if $f$ is a classical solution. Hence, if solutions from $C^1(G\times[0,T))$ are considered, the Boltzmann--Enskog equation is irreversible.

Now let $f\in\Delta$. In this case, condition (\ref{EqCollContf}) is satisfied, since it is equivalent to (\ref{EqCollContF2}) by definition. Indeed, both sides of (\ref{EqCollContf}) are understood as functionals on $CC^1(\Omega)$:
\begin{equation}\label{EqCollContfD}
\int f(r_1,v_1,t;\Gamma)f(r_1-a\sigma,v_2,t;\Gamma)D(\Gamma)\,d\Gamma=\int f(r_1,v'_1,t;\Gamma)f(r_1-a\sigma,v'_2,t;\Gamma)D(\Gamma)\,d\Gamma.
\end{equation}
This is exactly condition (\ref{EqCollContF2}). Hence, if solutions from $\Delta$ are considered, the Boltzmann--Enskog equation is reversible.

Finally, if we consider the solutions $f_\varepsilon(r_1,v_1,t)$ constructed in theorem~1, then condition (\ref{EqCollContf}) is also satisfied if $t\in[-T,T]$. Truly, it coincides with (\ref{EqCollContF2}) (with $F_2$ replaced by $F_{2,\varepsilon}$), since $f_\varepsilon(x_1,t)f_\varepsilon(x_2,t)=F_{2,\varepsilon}(x_1,x_2,t)$ for $t\in[-T,T]$.
\end{proof}

So, condition (\ref{EqCollContF2}) for test functions is crucial for the reversibility of the Boltzmann--Enskog equation for the class of solutions $\Delta$. If we drop this condition, then, in general, the microscopic solution (\ref{Eqfdelta}) does not satisfy the Boltzmann--Enskog equation. The right-hand side and left-hand side time derivatives of $f$ are different, only the right-hand side time derivative is equal to the right-hand side of the Boltzmann--Enskog equation. Using a more sophisticated analysis we can prove the assertion of theorem~1 for this case as well.

This result can be compared with the Kozlov's diffusion theorems for a collisionless continuous medium in a rectangular box \cite{Kozlov}. The dynamics is described by the Liouville equation. If we consider a solution in the form of a delta-function, then the solution is periodic or almost periodic (since this case is reduced to the single-particle dynamics). But if we consider an integrable solution, the spatial density weakly converges to the uniform distribution, as time indefinitely increases. So, the asymptotic dynamical properties of solutions depend on the considered class of functions.

Here we show that in the case of the Boltzmann--Enskog equation, not only the asymptotic properties of solutions, but also the reversibility or irreversibility property of them depend on the considered class of functions.

\section{Reconciliation of the kinetic equations with the microscopic dynamics}

Of course, the above arguments cannot be regarded as a derivation of the kinetic Boltzmann--Enskog equation from the microdynamics. According to the traditional point of view, we must start with some initial $N$-particle distribution function $D$ (a randomization of the initial $N$-particle phase point) and analyse the dynamics $D_t=S^{(N)}_{-t}D$ of it. By doing this, we get equation (\ref{EqBEB}) (as well as the other equations of the BBGKY hierarchy). Then we must obtain a kinetic equation using some limiting procedure. But, as we have stated in the introduction, until now we have not a fully satisfactory realization of this program.

However, the microscopic solutions of the Boltzmann--Enskog equation and the investigations of the reversibility or irreversibility property of it suggest us another way of the reconciliation of the kinetic equations with the microdynamics.

Let us postulate not the microdynamics, but the Boltzmann--Enskog equation. That is, the dynamics of a system that we call ``a system of hard spheres'' is postulated to be described by the Boltzmann--Enskog equation (\ref{EqBE}) with a \textit{continuous} function $f$. Then, this equation is noticed to have also special (noncontinuous) solutions in form (\ref{Eqfdelta}) (or the continuous solutions constructed in theorem~1). The dynamics of $(q_i(t,\Gamma),w_i(t,\Gamma))$, $i=1,\ldots,N$, in these solutions is interpreted as the microscopic dynamics. 

So, the establishment of the existence of microscopic solutions for a kinetic equation can be regarded as a derivation of the particle dynamics from it. Instead of the derivation of the kinetic equations from the microdynamics we suggest a kind of derivation of the microdynamics from the kinetic equations.

According to this point of view, the ``fundamental'' laws of motion are not the laws of motion for individual particles (reductionism), but the laws of motion for the function $f$, which describes the system as a whole (holism). Laws of motions for the individual particles are approximations. This approximation is suitable if the density function has a special (delta-like) form (see, e.g., theorem~1).

Functional mechanics \cite{VolFuncMech,VolRand,Mikh,PiskVol,Pisk,TrVolFuncRat} also claims that the Newton equation (or Hamilton equations) is not fundamental even in the classical nonrelativistic world.

Also, it should be noted that, in principle, the function $f$ in the Boltzmann--Enskog equation, as well as other kinetic equations, can be interpreted without any reference to the notion of a particle. Let $f$ be normalized not on the volume $V$, but on the total mass $M=Nm$ ($m$ is the mass of a particle): $\int f(r,v,t)drdv=M$. Then the Boltzmann--Enskog equation has the same form, but the constant $n=\frac NV$ in the collision integral (\ref{EqStBE}) is replaced by $\frac1m$. $f(r,v,t)\Delta r\Delta v$ has the meaning of the mass of the matter which is contained in the volume element $\Delta r$ and moves with the velocity in the range $\Delta v$. So, $f(r,v,t)$ is the density function of the matter in the phase point $(r,v)\in\mathbb T^3\times\mathbb R^3$.

We have a kind of continuous medium. In contrast to the usual continuum mechanics, there is no definite velocity in a fixed point of physical space: in the same volume element $\Delta r$ different parts of the matter move with different velocities. The terms $f(r_1,v_1,t)f(r_1\pm a\sigma,v_2,t)$ in the collision integral are intensities of interaction of different parts of the matter.

Initially, atomic theory of matter was an attempt to reduce the properties of the matter to the properties of its elementary blocks. But the problems like the reversibility paradox show that this program and the notion of an atom should be revised.

Thus, we may still interpret the function $f$ as the single-particle density function, or we may interpret it as the density function of a continuous matter. In both cases, we postulate the Boltzmann--Enskog equation for the system as a whole rather than the equations for the individual particles.

\begin{remark}
A question concerns the notion of the derivation can arise. Mathematically, we can postulate a kinetic equations and speak about the derivation of the particle dynamics from it. But actually, kinetic equations are known to be obtained heuristically with the use of the notion of a particle and the laws of motion for particles. 

It is more preferable to speak not about a derivation, but about a reconciliation of the kinetic dynamics with the microdynamics. Usually, the reconciliation is understood as a derivation of the kinetic equations from the microdynamics. We claim that neither kinetic dynamics can be derived from the microdynamics, nor microdynamics can be derived from the kinetic dynamics in the ``pure'' sense. Every level of nature has its own laws, but one can speak about some agreement between them.

We do not demand a kinetic equation to be derived from the microdynamics, by we say that a kinetic equation is in agreement with the microdynamics if it has microscopic solution or if it is an approximation of another kinetic equation which has microscopic solutions (see below).
\end{remark}

\section{Other kinetic equations}

\subsection{Boltzmann--Enskog--Vlasov equation}

The obtained results can be generalized to some other kinetic equations. Bogolyubov \cite{Bogol75,BogBog} noticed that the kinetic equation for hard spheres with additional  long-range pair interaction potential $\Phi(|r_i-r_j|)$ also has microscopic solutions. Such system is described by the Boltzmann--Enskog equation with an additional ``Vlasov integral'' (we will call it the Boltzmann--Enskog--Vlasov equation):
\begin{multline}\label{EqBEV}
\frac{\partial f(r_1,v_1,t)}{\partial t}=-v_1\frac{\partial f(r_1,v_1,t)}{\partial r_1}+
na^2\int_{(v_{21},\sigma)\geq0}
(v_{21},\sigma)[f(r_1,v'_1,t)f(r_1+a\sigma,v'_2,t)\\-f(r_1,v_1,t)f(r_1-a\sigma,v_1,t)]
d\sigma dv_2
+\frac nm\int\frac{\partial\Phi(|r_1-r_2|)}{\partial r_1}\rho(r_2,t)dr_2
\frac{\partial f(r_1,v_1,t)}{\partial v_1},
\end{multline}
where $m>0$ is a mass of a hard sphere, $\rho(r_2,t)=\int f(r_2,v_2,t)\,dv_2$. Let us formulate the proposition analogous to proposition~2.

\begin{proposition}
Consider the (generalized) function $f$ given by (\ref{Eqfdelta}) where the time dependences of $q_i(t,\Gamma)$ and $w_i(t,\Gamma)$ are defined by initial values $\Gamma$ and the Newtonian laws of motion: $\dot q_i(t)=w_i(t)$, and
\begin{equation}\label{EqNewtonWi}\dot w_i(t)=-\frac1m\sum_{j\neq i}\frac{\partial\Phi(|q_i(t)-q_j(t)|)}{\partial q_i(t)}\end{equation}
if $|q_i(t)-q_j(t)|>a$ for all $j\neq i$, otherwise $w_i(t)$ changes according to the law of elastic collision (\ref{EqHardColl}) (with the replacement of $v_{1,2}$ by $w_{i,j}$). This generalized function is defined as a functional on $CC^1(\Omega)$ in the same manner as in section~3 (with the correspondingly defined dynamic flow $S^{(N)}_t$). This function satisfies the Boltzmann--Enskog--Vlasov equation (\ref{EqBEV}).
\end{proposition}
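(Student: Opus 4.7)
The plan is to mimic the proof of Proposition~2 with a modified dynamic flow. First, I would redefine the flow $S^{(N)}_t$ appearing in \eqref{EqProdDelta} to be the flow generated by the Newtonian equations with pair potential $\Phi$ (i.e.\ by \eqref{EqNewtonWi}) between collisions, together with the elastic hard-sphere collision law \eqref{EqHardColl} at contact. One must verify that this flow still preserves the collision-continuity condition \eqref{EqCollCont}: between collisions the motion is smooth Hamiltonian and so trivially preserves it, while at contacts the pre- and post-collisional values of a test function agree by the very form of \eqref{EqHardColl}. This ensures that $f$ is well-defined as a functional on $CC^1(\Omega)$ with this new flow, and that the identities \eqref{EqFfD}, \eqref{EqF} remain in force.

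The key step is then to derive the analogue of equation \eqref{EqBEB}: the first equation of the BBGKY hierarchy for $N$ hard spheres with additional smooth pair potential $\Phi$. A direct extension of the derivation in \cite{Cerc}, obtained by integrating Liouville's equation (now including the Hamiltonian drift generated by $\Phi$) over the variables $x_2,\dots,x_N$, produces the same Enskog collision term involving $F_2$ evaluated at contact points $r_1\pm a\sigma$, plus a new smooth Vlasov-type integral of $\partial_{r_1}\Phi(|r_1-r_2|)\cdot\partial_{v_1}F_2(r_1,v_1,r_2,v_2,t)$ over $(r_2,v_2)$. Pairing the resulting equation with an arbitrary $D\in CC^1(\Omega)$ and invoking $(f,D)=F_1$, $(f f,D)=F_2$, the collision part reproduces $\St f$ paired with $D$ word-for-word as in the proof of Proposition~2, while the potential part matches the pairing of the Vlasov integral in \eqref{EqBEV} with $D$, once one writes $\rho(r_2,t)=\int f(r_2,v_2,t)\,dv_2$ and uses the product interpretation $(f(r_1,v_1;\Gamma)f(r_2,v_2;\Gamma),D)=F_2(r_1,v_1,r_2,v_2,t)$. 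Since $D$ is arbitrary in $CC^1(\Omega)$, this is precisely the statement that $f$ satisfies \eqref{EqBEV} as a generalized function.

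The main obstacle will be the coefficient bookkeeping in the Vlasov term: the BBGKY derivation produces a prefactor of the form $(N-1)/m$ (by symmetry among the $N-1$ indices one is summing over) multiplying the integral of $F_2$ over $(r_2,v_2)$, and one must verify that with the normalization $F_2=V(1-1/N)\int S^{(N)}_{-t}D\,dx_3\ldots dx_N$ of \eqref{EqF} this reproduces exactly the prefactor $n/m=N/(mV)$ of \eqref{EqBEV}. A secondary technical point is that one might worry about a self-interaction singularity of the form $\partial\Phi(0)/\partial r_1$ when the sum of delta-functions is squared naively; this is harmless here, because the test functions live on $\Omega$, on which all hard-sphere centers are separated by at least $a$, so that the defining pairing $(f f,D)=F_2$ automatically produces only off-diagonal two-particle contributions and the integrand $\partial_{r_1}\Phi(|r_1-r_2|)$ is evaluated away from its singularity.
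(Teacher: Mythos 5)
Your proposal is correct and follows essentially the route the paper intends: it is the direct extension of the proof of Proposition~2 (pass to the first BBGKY-hierarchy equation for the $N$-particle flow, here augmented by the smooth Hamiltonian drift generated by $\Phi$, then rewrite it as a pairing with an arbitrary $D\in CC^1(\Omega)$ via (\ref{EqFfD})), which is exactly how the paper treats the Boltzmann--Enskog case and implicitly Proposition~3. Your coefficient check also works out: with the normalization (\ref{EqF}) the factor $(N-1)/m$ from the hierarchy combines with $V(1-\frac1N)$ to give precisely $n/m$, and the self-interaction worry is indeed void because the pairing $(ff,D)=F_2$ produces only off-diagonal contributions supported on $\Omega$.
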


A theorem analogous to theorem~1 can be formulated for this case as well. The formulation is exactly the same, except of the corresponding definition of the dynamic flow.

\begin{theorem}
1) Let us treat the Boltzmann--Enskog--Vlasov equation (\ref{EqBEV}) as an equation for functions $f\in C^1(\mathbb R^3\times\mathbb R^3\times[0,T))$, $T>0$. In this case, the equation is irreversible, i.e., if some function $f(r_1,v_1,t)$ is a solution of this equation, then, in general, the function $\widetilde f(r_1,v_1,t)=f(r_1,-v_1,-t)$ is not a solution;

2) Let us treat the Boltzmann--Enskog equation as an equation for generalized functions $f\in\Delta$ on test functions from $CC^1(\Omega)$. In this case, the equation is reversible, i.e., if some generalized function $f(r_1,v_1,t;\Gamma)$ is a solution of this equation, then the generalized function $\widetilde f(r_1,v_1,t;\Gamma)=f(r_1,-v_1,-t;\Gamma)$ is also a solution.

3) The continuously differentiable solutions $f_\varepsilon(r_1,v_1,t)$ constructed in theorem~1 (sums of delta-like functions) are reversible on time segments $[-T,T]$, where $T>0$ depends on $\varepsilon>0$.

\end{theorem}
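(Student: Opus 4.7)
The plan is to imitate the proof of Theorem~1 step by step, the only genuinely new task being to check that the Vlasov term is compatible with time reversal. First I would perform the substitution $\widetilde f(r_1,v_1,t)=f(r_1,-v_1,-t)$ and track how each term of (\ref{EqBEV}) transforms. The transport and collision contributions behave exactly as in the proof of Theorem~1, reducing the reversibility of $\widetilde f$ to condition (\ref{EqCollContf}). For the Vlasov integral, I would observe that $\widetilde\rho(r_2,t)\equiv\int\widetilde f(r_2,v_2,t)\,dv_2=\rho(r_2,-t)$ and that $\partial f(r_1,-v_1,-t)/\partial v_1=-\partial\widetilde f(r_1,v_1,t)/\partial v_1$. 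These two sign changes cancel, so the Vlasov term is automatically invariant under the substitution and introduces no further constraint. Consequently, the analysis of Theorem~1 applies verbatim to (\ref{EqBEV}).

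For part~1, condition (\ref{EqCollContf}) is generically false for classical $C^1$ solutions (this is independent of the Vlasov addition), so $\widetilde f$ is not a solution in general. For part~2, I would invoke Proposition~3, which gives a microscopic solution in $\Delta$ once the dynamic flow $S^{(N)}_t$ is reinterpreted as the Newtonian flow generated by (\ref{EqNewtonWi}) together with the elastic-collision rule (\ref{EqHardColl}). The identification of (\ref{EqCollContf}), read as an equality of functionals on $CC^1(\Omega)$, with the two-particle continuity condition (\ref{EqCollContF2}) is purely a distributional rewriting and is insensitive to the specific flow used to define $F_2$. Thus $f\in\Delta$ automatically satisfies (\ref{EqCollContf}) and the reversibility follows.

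For part~3, I would reproduce the construction of Theorem~1, replacing the hard-sphere flow with the Newtonian flow $S^{(N)}_t$ of Proposition~3. The symmetrized initial density $D_\varepsilon$ is unchanged, and the key step is to show that, for sufficiently small $\varepsilon$, $D_{t,\varepsilon}=S^{(N)}_{-t}D_\varepsilon$ can again be written as a sum of shifted mollifiers $\delta^i_\varepsilon(\cdot-q_i(t,\Gamma),\cdot-w_i(t,\Gamma),t,\Gamma)$ whose supports in position space are pairwise separated by more than $a$ on $[-T,T]$. This then yields the factorization (\ref{EqChaos}) (the ``molecular chaos'' identity) on that interval, reduces (\ref{EqBEV}) for $F_{1,\varepsilon}$ to itself, and makes (\ref{EqCollContf}) hold pointwise; reversibility then follows from the computation in part~1. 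The main obstacle is continuous dependence of the Newtonian-plus-elastic-scattering flow on initial data: I would use that, on a trajectory $\Gamma\mapsto S^{(N)}_t\Gamma$ with $|q_i(t,\Gamma)-q_j(t,\Gamma)|>a$ except at finitely many transversal collisions in $[-T,T]$, the flow is $C^1$ in $\Lambda$ between collisions (because $\Phi$ is smooth) and continuous across collisions (because (\ref{EqHardColl}) is a continuous map for non-grazing encounters). A compactness argument then produces a uniform $\varepsilon_T$ for which the separation of supports and the continuous differentiability of $D_{t,\varepsilon}$ persist throughout $[-T,T]$, completing the proof.
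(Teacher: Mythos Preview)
Your proposal is correct and follows essentially the same approach as the paper, whose entire proof is the one-line remark that the argument is analogous to that of Theorem~2 with the Vlasov term being reversible in all cases. Note, however, a numbering slip: what you repeatedly call ``Theorem~1'' when discussing the substitution $\widetilde f=f(r_1,-v_1,-t)$ and condition~(\ref{EqCollContf}) is actually Theorem~2 in the paper's numbering; Theorem~1 is the construction of the approximate solutions $f_\varepsilon$, which you correctly invoke only in part~3.
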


The proof is analogous to the proof of theorem~2 (the Vlasov term is reversible in both cases, only the Boltzmann--Enskog collision integral is irreversible in the case of a continuous $f$).

Thus, we can speak about a derivation of the classical Newton equation (or Hamilton equations) from the Boltzmann--Enskog--Vlasov equation, since it has the corresponding microscopic solutions.

\subsection{Boltzmann equation for hard spheres}

Let us consider the Boltzmann equation for hard spheres. It has the same form (\ref{EqBE}) as the Boltzmann--Enskog equation, but the collision integral has the form
$$
\St f=na^2\int_{(v_{21},\sigma)\geq0}
(v_{21},\sigma)[f(r_1,v'_1,t)f(r_1,v'_2,t)-f(r_1,v_1,t)f(r_1,v_1,t)]
d\sigma dv_2.
$$

The Boltzmann equation has not microscopic solutions. It can be obtained from the Boltzmann--Enskog equation by the Boltzmann--Grad limit $a\to0$, $n\to\infty$ (or $N\to\infty$, since $n=\frac NV$ and $V=const$), $na^2=const$ (also one can prove the week convergence of the corresponding solutions \cite{BelLach88,ArkCerc}). The Boltzmann--Enskog equation is considered to be more exact than the Boltzmann one \cite{BelLach91}. The Boltzmann equation is an approximation of the Boltzmann--Enskog one in the case of a large number of hard spheres and  a negligible diameter of them. 

It is worthwhile to note that the Bogolyubov's derivation of the Boltzmann equation from the microdynamics leads not to the Boltzmann equation itself, but to some modification of it, which turns to the Boltzmann--Enskog equation in the case of a hard sphere gas \cite{Bogol}.

So, the Boltzmann equation has not microscopic solutions, but it can be represented as an approximation of an equation which has such solutions. In this sense one can speak about a reconciliation of the reversible microdynamics of hard spheres and irreversible dynamics of the Boltzmann equation for a hard sphere gas.

\subsection{Boltzmann equation in the general form}

Consider the Boltzmann equation with a more general form of the collision integral

\begin{multline*}
\frac{\partial f(r_1,v_1,t)}{\partial t}=-v_1\frac{\partial f(r_1,v_1,t)}{\partial r_1}+
na^2\int_{\mathbb R^3\times S^2}
B(v_{21},\sigma)[f(r_1,v'_1,t)f(r_1,v'_2,t)\\-f(r_1,v_1,t)f(r_1,v_1,t)]
d\sigma dv_2.
\end{multline*}
This case describes the dynamics of the particles with a pair interaction potential $\Phi$. The function (collision kernel) $B$ is determined by $\Phi$. 

This equation also has not microscopic solutions. However, as in the considered particular case of hard spheres, it would be interesting to obtain this equation as a limit of another kinetic equation which has microscopic solutions. Let us  sketch this construction.

Let $\Phi$ be an interaction potential which corresponds to the given $B$. Let $N$ particles with positions $q_i(t)$ and velocities $w_i(t)$ at the moment $t$, $i=1,\ldots,N$, interact with the pair potential $\Phi(\frac{|q_i-q_j|}a)$, $a>0$. Assume that $\Phi(r)\to0$ as $r\to\infty$.

Consider the Boltzmann--Grad limit $N\to\infty$, $a\to0$, $Na^2\to const$. Since $a\to0$ and $\Phi(r)\to0$ as $r\to\infty$, the interaction time tends to zero. Hence, the interactions can be approximately described as instantaneous. The dynamics of particles with instantaneous interactions can be described by a Boltzmann--Enskog-like equation, which has microscopic solutions.

A detailed construction of this approximation will be a subject of a further work.

\section{Conclusions}

We have formulated the existences of microscopic solutions of the kinetic Boltzmann--Enskog equation for hard spheres and the Boltzmann--Enskog--Vlasov equation for hard spheres with additional long-range interaction potential as rigorous propositions and have pointed out the importance of additional condition (\ref{EqCollCont}). We have analyzed the relation between the irreversibility of these equations and the reversibility of the microscopic solutions of them. The solutions of these kinetic equations are time-irreversible if the considered solutions belong to the class of continuous functions. They are reversible if the considered solutions have the form of sums of delta-functions (microscopic solutions have this form). So, the reversibility or irreversibility property of the dynamics depends on the considered class of functions.

Also we have constructed the continuous approximate microscopic solutions. They are reversible on bounded time intervals. The approximate microscopic solutions have the form of sums of delta-like distributions. They behave as ``spreaded hard spheres'', the limiting motion of the centers of these delta-like distributions is described by the exact microscopic solutions.

We have used these results to propose a new way of reconciliation of the kinetic equations with the microdynamics. In contrast to the traditional paradigm, where the laws of motion for individual particles are regarded as fundamental laws (reductionism), we have postulated the kinetic equations, which describes the dynamics of the system as a whole (holism). The establishment of the existence of microscopic solutions for a kinetic equation is a kind of derivation of the microdynamics. So, one can speak about a derivation of the microdynamics from the kinetic dynamics.

Finally, we have discussed the Boltzmann equation. It has not microscopic solutions, but it can be regarded as an approximation of the Boltzmann--Enskog (or a Boltzmann--Enskog-like) equation in the case of a large number of particles and a negligible interaction radius of them. In this sense one can speak about a reconciliation of the microdynamics with the Boltzmann equation as well.

\section{Acknowledgments}

The author is grateful to V.\,V.~Kozlov, S.\,V.~Kozyrev, A.\,I.~Mikhailov, A.\,N.~Pechen, H.~Spohn, V.\,V.~Vedenyapin, I.\,V.~Volovich, E.\,I.~Zelenov, and participants of the Seminar of the Department of Mathematical Physics (Steklov Mathematical Institute of the Russian Academy of Sciences) for fruitful discussions and remarks. This work was partially supported by the Russian Foundation for Basic Research (projects 11-01-00828-a and 11-01-12114-ofi-m-2011), the grant of the President of the Russian Federation (project NSh-2928.2012.1), and the Division of Mathematics of the Russian Academy of Sciences.

\end{document}